\definecolor{webgreen}{rgb}{0,.5,0}
\definecolor{webbrown}{rgb}{.6,0,0}
\DeclareMathOperator{\adh}{Adh}
\def\modd#1 #2{#1\ \mbox{\rm (mod}\ #2\mbox{\rm )}}
\newcommand{\seqnum}[1]{\href{https://oeis.org/#1}{\underline{#1}}}
\begin{document}

\theoremstyle{plain}
\newtheorem{theorem}{Theorem}
\newtheorem{corollary}[theorem]{Corollary}
\newtheorem{lemma}[theorem]{Lemma}
\newtheorem{proposition}[theorem]{Proposition}

\theoremstyle{definition}
\newtheorem{definition}[theorem]{Definition}
\newtheorem{example}[theorem]{Example}
\newtheorem{conjecture}[theorem]{Conjecture}

\theoremstyle{remark}
\newtheorem{remark}[theorem]{Remark}

\title{Words Avoiding Reversed Factors, Revisited}

\author{Lukas Fleischer and Jeffrey Shallit\\
School of Computer Science\\
University of Waterloo \\
Waterloo, ON  N2L 3G1 \\
Canada\\
\href{mailto:shallit@uwaterloo.ca}{\tt shallit@uwaterloo.ca}
}

\maketitle

\begin{abstract}
In 2005, Rampersad and the second author proved a number of theorems about
infinite words $\bf x$ with the property that if $w$ is any
sufficiently long finite factor of $\bf x$, then its reversal $w^R$ is 
{\it not\/}
a factor of $\bf x$.  In this note we revisit these results, reproving
them in more generality, using machine computations only.  Two
different techniques are presented.
\end{abstract}

\section{Introduction}

In this paper we are concerned with certain avoidance properties
of finite and infinite words.

Recall that a word $x$ is said to be a {\it factor} of a word 
$w$ if there exist words $y,z$ such that $w = yxz$.  For example,
the word {\tt act} is a factor of the word
{\tt factor}.  Another term for {\it factor}
is {\it subword}, although this latter term sometimes refers to
a different concept entirely.

For $n \geq 1$
define the property $P_\ell (w)$ of a word $w$ as follows:
\begin{displaymath}
\forall {\text{ factors $x$ of $w$ }}
(|x| \geq \ell)   \implies \text{ ($x^R$ is not a factor
of $w$).}
\end{displaymath}
If $P_\ell(w)$ holds, then we say {\it $w$ avoids reversed
factors of length $\geq \ell$}.    In particular, if $P_\ell (w)$ holds,
then $w$ has no palindromes of length $\geq \ell$.
Clearly $P_1(w)$ holds only for
$w = \varepsilon$, the empty word, so in what follows we always
assume $\ell \geq 2$.  

Define $L_\ell(\Sigma_k)  = \{ w \in \Sigma_k^* \ : \ P_\ell (w) \text{ holds} \} $,
the set of all words over the finite alphabet $\Sigma_k$
avoiding reversed factors of length $\geq \ell$.
In 2005, Rampersad and the second author \cite{Rampersad&Shallit:2005}
proved a number of theorems about $L_\ell(\Sigma_k)$ and related infinite
words.  These results were proved mostly by case-based arguments.
In this paper, we revisit these results, using a new method,
based on finite automata.  Our method is able to prove most of
the results in the previous paper, and more, using a unified
approach.  

A companion paper is \cite{Fleischer&Shallit:2019b}, which
explores the same theme with regard to palindromes.

\section{The language of words avoiding reversed factors is regular}

We define $\Sigma_k = \{ 0, 1, \ldots, k-1 \}$.

The crucial observation is contained in this section.
We show that for every $n \geq 2$ and every $k \geq 1$,
the language $L_\ell(\Sigma_k)$ is regular.

\begin{theorem}
\begin{equation}
\overline{L_\ell(\Sigma_k)}
= \bigcup_{x \in \Sigma_k^\ell} \left( \, \Sigma_k^* \, x \, \Sigma_k^* \cap \Sigma_k^* \, x^R \, \Sigma_k^* \, \right) .
\label{lprime}
\end{equation}
\label{one}
\end{theorem}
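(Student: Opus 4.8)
The plan is to prove the set identity \eqref{lprime} by establishing the two inclusions separately. Both directions are essentially unwindings of the definition of $P_\ell$, and the only genuine content is a reduction from an arbitrarily long factor witnessing the failure of $P_\ell$ to a factor of length exactly $\ell$.

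For the inclusion $\supseteq$, suppose $w$ lies in the right-hand side, so there is some $x \in \Sigma_k^\ell$ with $w \in \Sigma_k^*\, x\, \Sigma_k^*$ and $w \in \Sigma_k^*\, x^R\, \Sigma_k^*$. Then $x$ is a factor of $w$ with $|x| = \ell \geq \ell$, and its reversal $x^R$ is also a factor of $w$; hence $P_\ell(w)$ fails, i.e., $w \in \overline{L_\ell(\Sigma_k)}$. This direction requires nothing beyond reading off the definition.

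For the inclusion $\subseteq$, suppose $w \in \overline{L_\ell(\Sigma_k)}$, so $P_\ell(w)$ fails: there is a factor $y$ of $w$ with $|y| \geq \ell$ such that $y^R$ is also a factor of $w$. Let $x$ be the length-$\ell$ prefix of $y$, so $x \in \Sigma_k^\ell$. Since $x$ is a prefix of the factor $y$, it is itself a factor of $w$, giving $w \in \Sigma_k^*\, x\, \Sigma_k^*$. Moreover $x^R$ is the length-$\ell$ suffix of $y^R$, and since $y^R$ is a factor of $w$, so is $x^R$; thus $w \in \Sigma_k^*\, x^R\, \Sigma_k^*$. Therefore $w$ belongs to the term of the union indexed by this particular $x$.

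The step I expect to be the (minor) crux is precisely this reduction to length exactly $\ell$: one must verify that replacing $y$ by its length-$\ell$ prefix preserves the property that the word and its reversal both occur as factors, which rests on the elementary facts that a prefix of $y$ corresponds under reversal to a suffix of $y^R$, and that factors are closed under taking factors. Once \eqref{lprime} is proved, regularity of $L_\ell(\Sigma_k)$ follows at once, since the right-hand side is a finite union of intersections of languages of the form $\Sigma_k^*\, u\, \Sigma_k^*$, each of which is regular, and the class of regular languages is closed under finite union, intersection, and complement.
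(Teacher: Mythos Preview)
Your proof is correct and follows essentially the same approach as the paper's own argument: both directions are handled identically, with the key step being to pass from an arbitrary witness $y$ of length $\geq \ell$ to its length-$\ell$ prefix $x$, observing that $x^R$ is then a suffix of $y^R$ and hence also a factor of $w$. The additional remark on regularity is fine but belongs to the subsequent corollary rather than to this theorem.
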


\begin{proof}
Suppose $w \not\in L_\ell(\Sigma_k)$.   Then $w$ contains $z$ and $z^R$
as factors for some $z$ with $|z| \geq \ell$.
Writing $z = xy$ with $|x| = \ell$, we
see that $w$ also contains $x$ and $x^R$ as factors, and hence
$w \in \Sigma_k^* \, x \, \Sigma_k^* \cap \Sigma_k^* \, x^R \, \Sigma_k^*$.

On the other hand, suppose $w \in \bigcup_{x \in \Sigma_k^\ell} 
( \Sigma_k^* \, x \, \Sigma_k^* \cap \Sigma_k^* \, x^R \, \Sigma_k^*) $.
Then there exists some $x$ of length $\ell$ such
that $w \in \Sigma_k^* \, x \, \Sigma_k^* \cap \Sigma_k^* \, x^R \, \Sigma_k^*$.
Hence $w$ contains both $x$ and $x^R$ as length-$\ell$ factors,
and so $w  \not\in L_\ell(\Sigma_k)$.  
\end{proof}

\begin{corollary}
The language $L_\ell (\Sigma_k)$ is regular.
\label{two}
\end{corollary}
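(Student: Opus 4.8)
The plan is to read off the result directly from Theorem~\ref{one}. Recall that the class of regular languages over a fixed finite alphabet is closed under the Boolean operations (union, intersection, complement) and under concatenation, and that every finite language is regular. So the entire argument is just a matter of checking that each building block appearing in~\eqref{lprime} is regular and that only finitely many of them occur.

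First I would observe that for any fixed word $x \in \Sigma_k^*$, the language $\Sigma_k^* \, x \, \Sigma_k^*$ of all words containing $x$ as a factor is regular: the singleton $\{x\}$ is finite, hence regular, $\Sigma_k^*$ is regular, and the language in question is the concatenation $\Sigma_k^* \cdot \{x\} \cdot \Sigma_k^*$. (Concretely one can exhibit a DFA with $|x|+1$ states built from the Knuth--Morris--Pratt failure function, but we do not need this level of detail here.) The same remark applies to $\Sigma_k^* \, x^R \, \Sigma_k^*$, since $x^R$ is again a fixed finite word.

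Next, for each individual $x \in \Sigma_k^\ell$ the set $\Sigma_k^* \, x \, \Sigma_k^* \cap \Sigma_k^* \, x^R \, \Sigma_k^*$ is an intersection of two regular languages, hence regular. There are exactly $k^\ell$ words $x$ of length $\ell$ over $\Sigma_k$, a finite number since $k$ and $\ell$ are fixed; therefore the union in~\eqref{lprime} is a finite union of regular languages and is regular. By Theorem~\ref{one} this union equals $\overline{L_\ell(\Sigma_k)}$, so $\overline{L_\ell(\Sigma_k)}$ is regular, and taking complements once more shows that $L_\ell(\Sigma_k)$ itself is regular.

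Since every step is a routine appeal to a standard closure property, there is essentially no hard part; the only point to watch is that the union ranges over the finite set $\Sigma_k^\ell$ rather than over an infinite index set, which is precisely why Theorem~\ref{one} --- reducing an arbitrary-length witness $z$ to a fixed-length witness $x$ of length $\ell$ --- is the essential input. If one wanted an effective statement, tracking automaton sizes through these constructions would yield an explicit (exponential in $\ell$) bound on the number of states of a DFA recognizing $L_\ell(\Sigma_k)$.
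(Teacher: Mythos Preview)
Your proof is correct and follows exactly the same approach as the paper: use Theorem~\ref{one} to express $\overline{L_\ell(\Sigma_k)}$ as a finite union of regular languages, then take the complement. You have simply spelled out the routine closure-property details that the paper leaves implicit.
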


\begin{proof}
Theorem~\ref{one} shows that $\overline{L_\ell(\Sigma_k)}$ is regular, as
it is the union of regular languages.   So $L_\ell(\Sigma_k)$ is regular.
\end{proof}

Corollary~\ref{two} provides an algorithmic way to characterize all
finite words avoiding reversed factors:  namely, just compute the
minimal DFA $A$ for $L_\ell (\Sigma_k)$.

It also provides a way to characterize the (one-sided) infinite 
words avoiding reversed factors:
since $L_\ell(\Sigma_k)$
is clearly factor-closed (that is, every factor of a word of
$L_\ell(\Sigma_k)$ is also a word of $L_\ell(\Sigma_k)$), it follows that
$A$ has only one non-accepting state, which is necessarily a dead
state.  Without loss of generality, then, we can delete this
dead state, obtaining an automaton $A'$ where every path is labeled
with a word of $L_\ell(\Sigma_k)$ and all words are so represented.
Hence all {\it infinite} words avoiding reversed factors (if any
exist) are given
precisely by the infinite paths through $A'$.  We can characterize
these using the results in Section~\ref{sec3}.

\section{Periodicity}
\label{persec}

Let $\Sigma^\omega$ denote the set of all one-sided infinite words
over the alphabet $\Sigma$.  For a finite nonempty word $x$,
let $x^\omega$ denote the infinite word $xxx\cdots$.  We say
that an element $\bf w$ of $\Sigma^\omega$ is {\it ultimately periodic}
if there exist finite words $y,x$ with $x \not= \varepsilon$ such that
${\bf w} = yx^\omega$.  Otherwise we say $\bf w$ is {\it aperiodic}.

In the expression of an ultimately periodic word in the form
$y x^\omega$, we call $|y|$ the {\it preperiod\/} and $|x|$ the 
{\it period}.

\begin{theorem}
Let $w_0, w_1$ be two noncommuting finite words (that is,
$w_0 w_1 \not= w_1 w_0$).  Define a morphism
$\gamma(i) = w_i$ for $i \in \{ 0, 1 \}$.  
Then ${\bf a} \in \{0,1\}^\omega$ is ultimately periodic
iff $\gamma({\bf a})$ is ultimately periodic.
\label{per}
\end{theorem}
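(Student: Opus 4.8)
The plan is to prove the two implications separately. The easy direction is that if $\bf a$ is ultimately periodic, then $\gamma(\bf a)$ is ultimately periodic: writing ${\bf a} = u v^\omega$ with $v \neq \varepsilon$, we immediately get $\gamma({\bf a}) = \gamma(u) \gamma(v)^\omega$, and since $v$ is nonempty and neither $w_0$ nor $w_1$ is empty (because $w_0 w_1 \neq w_1 w_0$ forces both to be nonempty), $\gamma(v)$ is nonempty, so $\gamma({\bf a})$ is ultimately periodic. This uses nothing beyond the definitions.

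For the converse — the substantive direction — I would argue the contrapositive: assuming $\bf a$ is aperiodic, I want to show $\gamma(\bf a)$ is aperiodic. Suppose for contradiction that $\gamma({\bf a}) = y x^\omega$ is ultimately periodic with period $p = |x|$. The idea is to "read off" the block structure of $\bf a$ from the periodicity of $\gamma(\bf a)$. Let $d = \gcd(|w_0|, |w_1|)$ and consider long windows of $\gamma(\bf a)$ of length $Np$ for suitable $N$; by periodicity, each such window determines the next letter of $\bf a$ once we know our current "offset" within the $w_i$-block decomposition. More precisely, the plan is to track, for each position $n$, the pair consisting of (the $n$-th letter $a_n$ of $\bf a$, or rather the factor of $\gamma(\bf a)$ it contributes) together with the residue class modulo $p$ of the position in $\gamma(\bf a)$ where the block $\gamma(a_0 a_1 \cdots a_{n-1})$ ends. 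Because $w_0$ and $w_1$ do not commute, the Fine–Wilf theorem (or the primitivity of the conjugacy class / the fact that noncommuting words cannot both be powers of a common word) guarantees that $\gamma(\bf a)$ is genuinely "seen" at period $p$ only if the block boundaries themselves eventually fall into a periodic pattern: the sequence of (letter, offset-mod-$p$) pairs takes finitely many values, so some value repeats, and the segment of $\bf a$ between two repetitions must be forced to repeat forever, contradicting aperiodicity of $\bf a$.

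The main obstacle is making the "block boundaries eventually become periodic" step rigorous, and this is exactly where the noncommutativity hypothesis must be used. The danger is that a factor of $w_0$ could coincide with a factor of $w_1$ in a way that lets the decomposition "slip," so that the same suffix of $\gamma(\bf a)$ is consistent with two different continuations of $\bf a$. To rule this out I would invoke the following defect-type fact: if $w_0 w_1 \neq w_1 w_0$, then $\{w_0, w_1\}$ is a code, and moreover there is a bound $B$ (depending only on $w_0, w_1$) such that any word of length $\geq B$ over $\{w_0, w_1\}$ has a uniquely determined first factor — in fact the set $\{w_0, w_1\}$ is a \emph{synchronizing} (comma-free-like) situation after passing to a bounded delay. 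Equivalently, I would use that the two-element code $\{w_0,w_1\}$ has bounded deciphering delay, a classical consequence of noncommutativity for two-word codes. Granting this, once $N$ is chosen larger than $B$, the offset-mod-$p$ together with a bounded look-ahead in $\gamma(\bf a)$ determines $a_n$, the state space $\{0,1\} \times \mathbb{Z}/p\mathbb{Z} \times (\text{bounded look-ahead data})$ is finite, the induced dynamics on $\bf a$ is eventually periodic by the pigeonhole principle, and we reach the desired contradiction.
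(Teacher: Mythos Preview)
Your proposal is correct and shares the paper's core idea: apply the pigeonhole principle to the residues $d(n)\bmod p$ of the block-start positions $d(n)=|\gamma(a_0\cdots a_{n-1})|$, once these exceed the preperiod. Where you and the paper diverge is in how the repetition of a residue is converted into ultimate periodicity of~$\mathbf a$.

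The paper does this in one stroke. If $d(j)\equiv d(k)\pmod p$ with $j<k$ and both past the preperiod, then the two suffixes $\gamma(a_j a_{j+1}\cdots)$ and $\gamma(a_k a_{k+1}\cdots)$ of $\gamma(\mathbf a)$ coincide. Either $a_{j+i}=a_{k+i}$ for all $i\ge 0$ (and we are done with period $k-j$), or there is a first index of disagreement, producing two infinite binary words with different first letters but equal $\gamma$-images; a cited theorem (the $\omega$-code property of a noncommuting pair) then forces $w_0w_1=w_1w_0$, a contradiction. No delay bound, no look-ahead window, no auxiliary state machine.

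Your route instead builds a finite-state dynamical system and needs the transition to be \emph{deterministic}; that is why you reach for bounded deciphering delay of $\{w_0,w_1\}$. This is true---for finite codes, having finite deciphering delay is equivalent to being an $\omega$-code, and a noncommuting pair is an $\omega$-code---but it is a stronger packaging of exactly the same fact the paper invokes. Moreover, the ``bounded look-ahead data'' you add to the state is redundant: past the preperiod those look-ahead symbols are already determined by the offset $\bmod\ p$, so the effective state space is just $\mathbb Z/p\mathbb Z$. In short, your argument works, but you can strip out the delay/look-ahead layer entirely and land on the paper's cleaner two-case argument.
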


\begin{proof}
Suppose ${\bf a} \in \{0,1\}^\omega$ is ultimately periodic, say
${\bf a} = y z^\omega$.  Then $\gamma({\bf a}) = \gamma(y) \gamma(z)^\omega$,
which shows that $\gamma({\bf a})$ is ultimately periodic with preperiod
$|\gamma(y)|$ and period $|\gamma(z)|$.

For the other direction, let ${\bf a} = a_0 a_1 a_2 \cdots$ and
suppose $\gamma({\bf a}) = {\bf b} = b_0 b_1 b_2 \cdots$ is ultimately periodic,
with preperiod $r$ and period $p$.   Thus $b_i = b_{i+p}$ for all
$i \geq r$.  

Now think of ${\bf b}$ as a concatenation of blocks, each of which
is either $w_0$ or $w_1$.  
Define $d(i) := |\gamma(a_0 a_1 \cdots a_{i-1})|$, and note
that the starting position in $\bf b$ of the $i$'th block,
for $i \geq 0$, is at index $d(i)$.
Let $s$ be the least integer such that $d(s) \geq r$.

By the infinite pigeonhole principle, there must be two integers
$j, k \geq s$, with $j < k$, such that 
\begin{equation}
d(j) \equiv \modd{d(k)} {p}.
\label{mod1}
\end{equation}
The $j$'th block begins at $b_{d(j)}$,
and the $k$'th block begins at $b_{d(k)}$.
The congruence~\eqref{mod1}, together with
the fact that $\bf b$ has period $p$, and
the inequality $d(j), d(k) \geq r$, show that
the two infinite words
$\gamma(a_j a_{j+1} a_{j_2} \cdots) = b_{d(j)} b_{d(j) + 1} b_{d(j) + 2} \cdots$ and
$\gamma(a_k a_{k+1} a_{k+2} \cdots) = b_{d(k)} b_{d(k) + 1} b_{d(k) +2} \cdots $
are identical.

There are now two cases:  either the infinite words $a_j a_{j+1} a_{j+2} \cdots$
and $a_k a_{k+1} a_{k+2} \cdots$ differ, or they are identical.

In the former case, let $i \geq 0$ be the least index such that
$a_{j+i} \not= a_{k+i}$.
Then $a_{j+\ell} = a_{k+\ell}$ for $0 \leq \ell < i$, and so it
follows that $d(j+i) \equiv \modd{d(k+i)} {p}$.  
Thus $b_{d(j+i)} b_{d(j+i)+1} b_{d(j+i)+2} \cdots =
b_{d(k+i)} b_{d(k+i)+1} b_{d(k+i)+2} \cdots $, and so
we have two infinite words,
\begin{displaymath}
{\bf y} = a_{j+i} a_{j+i+1} \cdots \quad \text{ and } \quad
{\bf z} = a_{k+i} a_{k+i+1} \cdots,
\end{displaymath}
one beginning with $0$ and the other beginning with $1$, such that
$\gamma({\bf y}) = \gamma({\bf z})$.  By
\cite[Thm.~2.3.5]{Shallit:2009}, it follows that $w_0$ and $w_1$
commute, a contradiction.

So $a_{j+i} = a_{k+i}$ for all $i \geq 0$, and hence $\bf a$
is ultimately periodic with period $k-j$.
\end{proof}

\section{Adherences}
\label{sec3}

The {\it adherence} $\adh(L)$ of a language is defined as follows:
$$ \adh(L) = \{ {\bf x} \in \Sigma^\omega \ : \ 
\text{every prefix of $\bf x$ is a prefix of some word of $L$} \} .$$
For example, see \cite{Nivat:1978}.

\begin{theorem}
Let $L$ be a regular language.
\begin{itemize}
\item[(a)]  If $L$ is finite then $\adh(L)$ is empty.

\item[(b)] If $L$ is infinite,
but has polynomial growth (that is, there exists a fixed integer $k$
such that the number of length-$n$ words in $L$ is $O(n^k)$),
then $\adh(L)$ is nonempty, but is countable and
contains only ultimately periodic words.

\item[(c)] If $L$ does not have polynomial growth (informally,
$L$ has exponential growth), then $\adh(L)$ is uncountable and
contains uncountably many aperiodic words.
\end{itemize}
\label{adh-thm}
\end{theorem}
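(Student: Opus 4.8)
The plan is to treat the three cases via the structure theory of regular languages, using the minimal DFA $A = (Q, \Sigma, \delta, q_0, F)$ for $L$ and the cycle structure of its underlying transition graph. For part (a), if $L$ is finite there is a bound $N$ on the length of its words, so no infinite word can have all its prefixes extendable to words of $L$; hence $\adh(L) = \varnothing$. For parts (b) and (c), the key observation is that $\adh(L)$ is nonempty iff $A$ (trimmed to its accessible, co-accessible part) contains a cycle reachable from $q_0$; an infinite path from $q_0$ that, from some point on, stays within strongly connected components gives an element of $\adh(L)$ by König's lemma, and conversely any element of $\adh(L)$ traces an infinite path in the finite automaton, which must eventually enter a cycle. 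So in both (b) and (c) the adherence is nonempty.

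For part (b), the standard dichotomy (see, e.g., the pumping-style analysis of the growth of regular languages) says that a regular language has polynomial growth precisely when no strongly connected component of its trimmed DFA contains two distinct cycles through a common state --- equivalently, every strongly connected component is a single simple cycle (possibly with a pendant structure). First I would invoke this characterization. Then an infinite path in $A$ must, after a finite prefix, be forced to run around one such single-cycle component forever (there is no choice of how to traverse it), so the corresponding word is of the form $y\,x^\omega$, i.e., ultimately periodic. There are only finitely many states, hence finitely many choices of the ``tail'' cycle and of the (finitely-branching, bounded-length-to-cycle) prefix, so $\adh(L)$ is countable. For part (c), non-polynomial growth means some strongly connected component $C$, reachable from $q_0$ and co-accessible, contains a state $q$ with two distinct return paths $u \ne v$ of (we may assume) equal length, or more simply two distinct cycles at $q$; this yields two finite words $w_0, w_1$ labelling loops at $q$ that do not commute (if $w_0 w_1 = w_1 w_0$ they would be powers of a common word, contradicting that the component genuinely branches --- one can arrange $w_0, w_1$ to be distinct nonempty loop labels witnessing the branching). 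Then, fixing a path $y$ from $q_0$ to $q$, the map ${\bf a} \mapsto y\,\gamma({\bf a})$ with $\gamma(i) = w_i$ sends $\{0,1\}^\omega$ into $\adh(L)$, it is injective because $w_0 \ne w_1$ and they do not commute (so $\gamma$ is injective on infinite words by \cite[Thm.~2.3.5]{Shallit:2009}), giving uncountably many words in $\adh(L)$; and by Theorem~\ref{per}, $y\,\gamma({\bf a})$ is aperiodic whenever ${\bf a}$ is, and uncountably many ${\bf a} \in \{0,1\}^\omega$ are aperiodic, so $\adh(L)$ contains uncountably many aperiodic words.

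The main obstacle I anticipate is establishing cleanly, in case (c), that non-polynomial growth really does supply two \emph{noncommuting} loop labels at a single state (so that Theorem~\ref{per} applies), rather than merely two distinct loops: one must either cite the precise growth dichotomy for regular languages and extract equal-length loops, or argue directly that if all loops at every state of a reachable nontrivial SCC pairwise commute then that SCC degenerates to a single cycle up to the period, forcing polynomial growth --- a short but slightly fiddly combinatorial argument about the cycle space of the graph. A secondary, milder point is the bookkeeping in (b) to see that the preperiod and period are bounded by $|Q|$, which makes countability (indeed finiteness of the set of distinct ultimately periodic words that arise, hence at worst countability if we had not trimmed) immediate.
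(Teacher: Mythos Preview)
Your approach is correct and substantively different from the paper's. The paper does not argue from scratch: for (b) and (c) it simply cites \cite[Prop.~3]{Lecomte&Rigo:2002} together with \cite[Lemmas 2.2 and 2.3]{Bell&Hare&Shallit:2018} for the equivalences ``$\adh(L)$ countable $\Leftrightarrow$ $L$ has polynomial growth'' and ``$\adh(L)$ uncountable $\Leftrightarrow$ $L$ has exponential growth'', extracts the ultimate-periodicity claim in (b) from the cited proof, and deduces the aperiodicity claim in (c) from the countability of the set of all ultimately periodic words. Your plan is instead a self-contained structural argument via the trimmed DFA and its SCC decomposition, invoking the polynomial/exponential growth dichotomy directly and using Theorem~\ref{per} for the aperiodicity in (c). This buys an elementary, explicit proof that dovetails with the paper's later use of birecurrent states; the paper's route buys brevity.

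One small correction to your plan for (b): the preperiod is \emph{not} bounded by $|Q|$ in general (take $L = a^* b c^*$, where $a^n b c^\omega \in \adh(L)$ for every $n$, with minimal preperiod $n+1$). This does not affect the argument, however: once you have shown that every element of $\adh(L)$ is ultimately periodic, countability is automatic, since the set of all ultimately periodic words over a finite alphabet is countable. Your anticipated obstacle in (c) --- obtaining genuinely \emph{noncommuting} loop labels --- is real but easily dispatched: a state $q$ witnessing exponential growth in the trimmed DFA has two distinct outgoing edges each lying on a return path to $q$, and the two resulting loops begin with different letters, so they cannot commute.
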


\begin{proof}
\leavevmode
\begin{itemize}
\item[(a)]  Trivial.

\item[(b)]  By combining \cite[Prop.~3]{Lecomte&Rigo:2002} with
\cite[Lemma 2.2]{Bell&Hare&Shallit:2018}, we see that $\adh(L)$ is
countable iff $L$ has polynomial growth.  Furthermore, the proof
of \cite[Prop.~3]{Lecomte&Rigo:2002} (specifically, the displayed line
following Eq.~(6) on p.~20 of that paper) actually shows that
$\adh(L)$ consists only of ultimately periodic words.

\item[(c)]  By combining \cite[Prop.~3]{Lecomte&Rigo:2002} with
\cite[Lemma 2.3]{Bell&Hare&Shallit:2018}, we see that
$\adh(L)$ is uncountable iff $L$ has exponential growth.
Since there are only a countable number of ultimately periodic
words, it follows that $\adh(L)$ contains uncountably many
aperiodic words.

\end{itemize}
\end{proof}

\section{Applications}

Let us now turn to reproving the principal theorems from
\cite{Rampersad&Shallit:2005}.  For many of these theorems, we can
employ the following strategy:  
use {\tt Grail}, a software package for manipulating automata 
\cite{Raymond&Wood:1994},
to construct a DFA $M$ corresponding to the regular expression
in Eq.~\eqref{lprime}, and from this obtain a DFA $M'$
for $L_\ell(\Sigma_k)$.  The
infinite words avoiding reversed
factors of length $\geq n$ are then given by all infinite paths
through the digraph of the transition diagram of $M'$.   
Using Theorem~\ref{adh-thm}, we can characterize the infinite words.
Using depth-first search, the finiteness of $L(M')$ can be determined
trivially.  The distinction between polynomial and exponential
growth can be determined efficiently using the methods
detailed in \cite{Gawrychoski&Krieger&Rampersad&Shallit:2010}:
call a state $q$ {\it birecurrent} if there are at least two distinct 
noncommuting words, $x_0$ and $x_1$, taking state $q$ to $q$.   
By Theorem~\ref{per}, if there is
a birecurrent state, we can find an explicit example of an aperiodic
infinite word labeled by an infinite
path through the automaton by replacing the $0$'s (resp., the $1$'s)
in any aperiodic binary word with $x_0$ (resp., $x_1$).
For example,
we can take ${\bf w} = {\bf t} = 01101001 \cdots$, the Thue-Morse
word \cite{Thue:1912,Berstel:1995}.

On the other hand,
if $L(M')$ has polynomial growth, then there are no birecurrent states.
In this case, only periodic infinite words with the
given avoidance properties exist.

In practice, creating the DFA from the regular expression in Eq.~\eqref{lprime}
is not completely straightforward, however,
as exponential blowup is observed
in some formulations.  By experimenting, we found that the
following technique works:  using de Morgan's law,
we rewrite Eq.~\eqref{lprime} as
$$ 
L_\ell(\Sigma)
= \bigcap_{x \in \Sigma_k^n} \left( \, \overline{\Sigma_k^* \, x \, \Sigma_k^*} \cup 
\overline{\Sigma_k^* \, x^R \, \Sigma_k^*} \, \right) , 
$$
and construct minimal DFA's for each individual term of the intersection.
Clearly it suffices to perform the intersection only for those
$x$ for which $x$ is lexicographically equal to or smaller than
$x^R$.
We then iteratively intersect the resulting DFA's term-by-term.
Although intermediate results can be quite large (thousands
of states), the final
DFA so produced is relatively small.

We used a short program written in 
Dyalog APL to create a Linux shell script with
the individual {\tt Grail} commands.  
We used {\tt Grail}, version 3.3.4 \cite{Campeanu:2019}.  
Running this script
creates a text file describing a DFA for $L_\ell (\Sigma_k)$.
We identify the unique nonaccepting
state in the result, and delete lines referencing this state from the
text file.  We then used another
Dyalog APL program to convert this text file to a file in
GraphViz format that can be used to display the automaton.

Since we explicitly construct the DFA for $L_\ell (\Sigma_k)$, another
benefit to our approach is as follows.  Using standard
techniques (e.g., \cite[\S 3.8]{Shallit:2009}), we can enumerate
the number of words of length $n$ in the language.   We briefly
sketch how this can be done.

Once the automaton $A = (Q, \Sigma, \delta, q_0, F)$ for $L_\ell(\Sigma_k)$
is known, we can create a useful 
matrix $r \times r$ matrix $M$ from it as follows (where
$Q = \{q_0, \ldots, q_{r-1} \}$ and $r = |Q|$):
$$M[i,j] = | \{ a \in \Sigma_k \ : \ \delta(q_i, a) = q_j \} |.$$
This matrix $M$ has the property that $M^n[i,j]$ is the number of
words taking $A$ from state $q_i$ to state $q_j$.  
The minimal polynomial of $M$ then gives a recurrence for
the number of length-$n$ words that $A$ accepts.  For the
details, see \cite{Fleischer&Shallit:2019b}.  Thus our method
allows an automated way to obtain the number of length-$n$
words in $L_\ell(\Sigma_k)$ and its asymptotic growth rate.

We now reprove the theorems from \cite{Rampersad&Shallit:2005}.

\subsection{Alphabet size 3}

\begin{theorem}
\label{tern_per}
There exists an infinite word $\mathbf{w}$ over $\Sigma_3$ such that if
$x$ is a factor of $\mathbf{w}$ and $|x| \geq 2$, then $x^R$ is not a
factor of $\mathbf{w}$.  Furthermore, $\mathbf{w}$ is unique up to
permutation of the alphabet symbols.
\end{theorem}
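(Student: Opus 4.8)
The plan is to follow the automaton-based strategy described above, specialized to the case $\ell = 2$ and $k = 3$. First I would invoke Corollary~\ref{two} to conclude that $L_2(\Sigma_3)$ is regular, and construct its minimal DFA $M'$ explicitly using the de Morgan reformulation: for each $x \in \Sigma_3^2$ with $x \leq_{\text{lex}} x^R$, build the minimal DFA for $\overline{\Sigma_3^* x \Sigma_3^*} \cup \overline{\Sigma_3^* x^R \Sigma_3^*}$, then iteratively intersect these. (There are $9$ words of length $2$ over $\Sigma_3$; only the $6$ with $x \leq_{\text{lex}} x^R$ need be processed, namely $00, 01, 02, 11, 12, 22$.) After deleting the unique dead state, I obtain the automaton $A'$ all of whose infinite paths are labeled by infinite words in $L_2(\Sigma_3)$, and conversely.

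Next, to prove \emph{existence} of an infinite word, I would run a depth-first search on the digraph of $A'$ to verify that it contains a cycle reachable from the start state; equivalently, that $L(M')$ is infinite. This exhibits at least one infinite word avoiding reversed factors of length $\geq 2$. In fact, I expect $A'$ to be small enough that one can simply read off an explicit eventually-periodic candidate, e.g.\ something like $012^\omega$ or a short purely periodic word, and check directly that it lies in $L_2(\Sigma_3)$. That direct check is routine: one verifies that no two length-$2$ factors are reverses of one another, which by the factor-closure argument preceding this subsection suffices for all longer factors as well.

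For the \emph{uniqueness up to permutation} claim, the key step is to show that $L_2(\Sigma_3)$ has polynomial (indeed linear, or bounded) growth, so that by Theorem~\ref{adh-thm}(b) every infinite word in $\adh(L_2(\Sigma_3))$ is ultimately periodic, and moreover — using the birecurrence criterion from \cite{Gawrychoski&Krieger&Rampersad&Shallit:2010} — that $A'$ has \emph{no} birecurrent state, ruling out aperiodic words via the contrapositive of the construction using Theorem~\ref{per}. Then one enumerates the finitely many cycles in $A'$ and the finitely many bi-infinite (or one-sided infinite) paths, and checks that up to the $6$ permutations of $\{0,1,2\}$ they all describe the same infinite word. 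I would carry this out by: (i) computing the growth type of $L(M')$ from the structure of $A'$ (polynomial iff no two distinct noncommuting return words at any state); (ii) listing the simple cycles of $A'$; (iii) observing each infinite path must eventually enter and stay in one such cycle, since branching only happens finitely often in a polynomially-growing automaton; (iv) checking these finitely many periodic words coincide modulo alphabet permutation.

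The main obstacle is step (ii)–(iv) of the uniqueness argument: one must be confident that the machine-computed DFA $A'$ is correct and small, and then do a careful finite case analysis of its infinite paths. In principle nothing is deep here — it is a finite check — but the bookkeeping of cycles and the verification that branching is bounded is exactly the sort of place where a hand argument (as in \cite{Rampersad&Shallit:2005}) becomes tedious, which is why delegating it to the automaton construction plus the general growth-dichotomy of Theorem~\ref{adh-thm} is the right move. I would also take care to phrase "unique up to permutation" correctly: the claim is that a \emph{single} orbit of the $S_3$-action on $\Sigma_3^\omega$ exhausts all infinite words with the property, so after fixing the first two distinct letters to be $0$ and $1$ the word is literally unique.
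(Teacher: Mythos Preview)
Your proposal is correct and follows the paper's approach: construct the DFA for $L_2(\Sigma_3)$ via the de~Morgan intersection over the six words $00,01,02,11,12,22$, delete the dead state, and analyze the infinite paths; the paper simply displays the resulting $10$-state automaton and reads off by inspection that the accepted words are exactly the prefixes of $(012)^* + (021)^* + (102)^* + (120)^* + (201)^* + (210)^*$, so the six infinite words form a single $S_3$-orbit and your birecurrence/cycle-enumeration machinery, while valid, is heavier than needed. One slip: your candidate $012^\omega = 01222\cdots$ contains the palindrome $22$ and fails the property---you mean $(012)^\omega$.
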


\begin{proof}
We use the following Linux shell script to create the automaton:
{\footnotesize
\begin{verbatim}
# making automaton for K = 3; N = 2
echo "00"
echo "(0+1+2)*00(0+1+2)*" | ./retofm | ./fmdeterm | ./fmmin | ./fmcment > d0
./fmstats d0
echo "01"
echo "(0+1+2)*01(0+1+2)*" | ./retofm | ./fmdeterm | ./fmmin | ./fmcment > a1
echo "(0+1+2)*10(0+1+2)*" | ./retofm | ./fmdeterm | ./fmmin | ./fmcment > b1
./fmunion a1 b1 | ./fmdeterm | ./fmmin > c1
./fmcross d0 c1 | ./fmdeterm | ./fmmin > d1
./fmstats d1
echo "02"
echo "(0+1+2)*02(0+1+2)*" | ./retofm | ./fmdeterm | ./fmmin | ./fmcment > a2
echo "(0+1+2)*20(0+1+2)*" | ./retofm | ./fmdeterm | ./fmmin | ./fmcment > b2
./fmunion a2 b2 | ./fmdeterm | ./fmmin > c2
./fmcross d1 c2 | ./fmdeterm | ./fmmin > d2
./fmstats d2
echo "11"
echo "(0+1+2)*11(0+1+2)*" | ./retofm | ./fmdeterm | ./fmmin | ./fmcment > c3
./fmcross d2 c3 | ./fmdeterm | ./fmmin > d3
./fmstats d3
echo "12"
echo "(0+1+2)*12(0+1+2)*" | ./retofm | ./fmdeterm | ./fmmin | ./fmcment > a4
echo "(0+1+2)*21(0+1+2)*" | ./retofm | ./fmdeterm | ./fmmin | ./fmcment > b4
./fmunion a4 b4 | ./fmdeterm | ./fmmin > c4
./fmcross d3 c4 | ./fmdeterm | ./fmmin > d4
./fmstats d4
echo "22"
echo "(0+1+2)*22(0+1+2)*" | ./retofm | ./fmdeterm | ./fmmin | ./fmcment > c5
./fmcross d4 c5 | ./fmdeterm | ./fmmin > d5
./fmstats d5
cp d5 aut32.txt
\end{verbatim}
}
\noindent which, after deleting lines corresponding to the dead state numbered
4, gives the following {\tt Grail} output:
\begin{verbatim}
(START) |- 0 
0 0 1        
0 1 2        
0 2 3        
1 1 5        
1 2 6        
2 0 7        
2 2 8        
3 0 9        
3 1 10       
5 2 8        
6 1 10       
7 2 6        
8 0 9        
9 1 5        
10 0 7       
0 -| (FINAL) 
1 -| (FINAL) 
2 -| (FINAL) 
3 -| (FINAL) 
5 -| (FINAL) 
6 -| (FINAL) 
7 -| (FINAL) 
8 -| (FINAL) 
9 -| (FINAL) 
10 -| (FINAL)
\end{verbatim}
which is depicted below.
\begin{center}
\begin{figure}[H]
\includegraphics[width=5in]{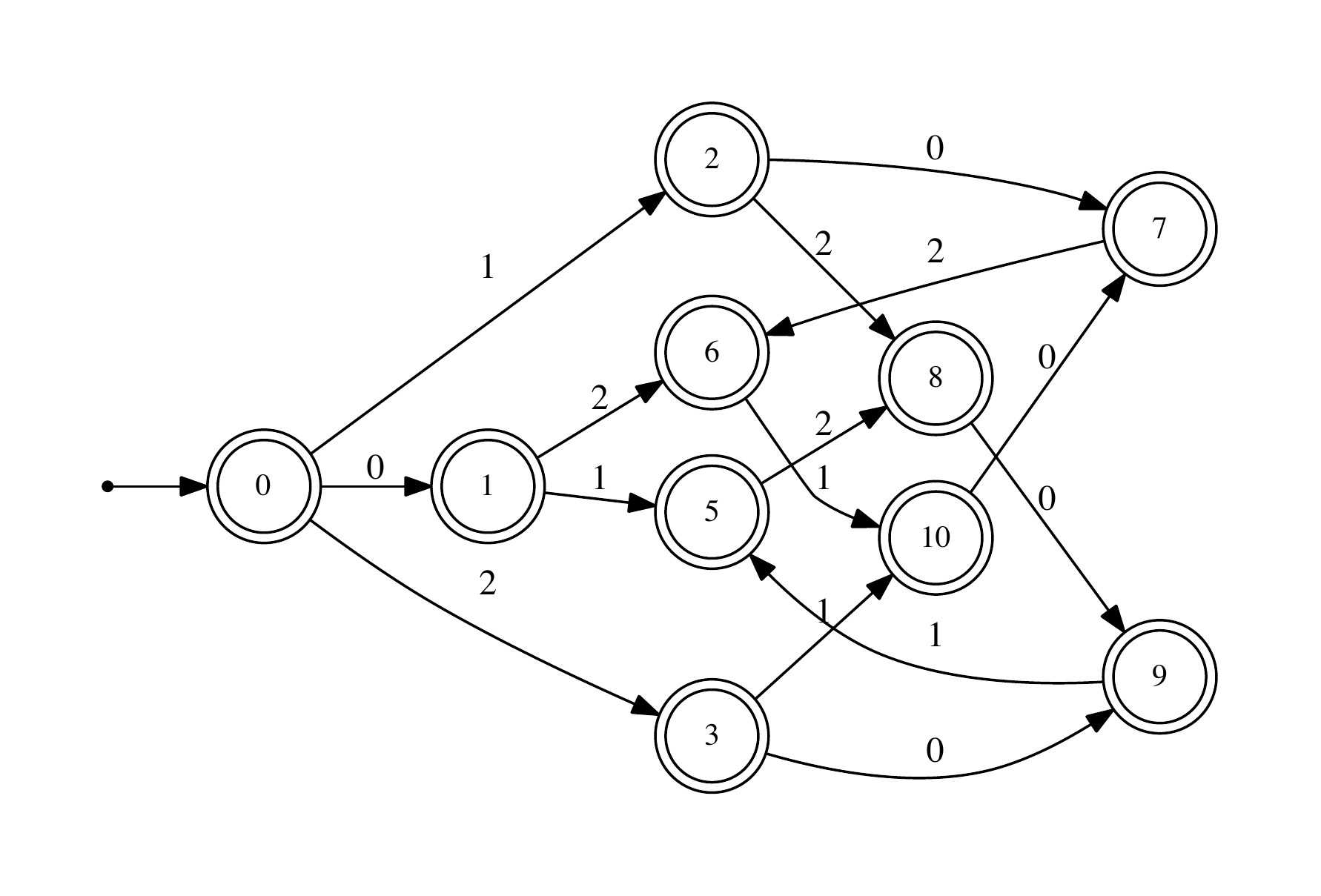}
\caption{Automaton for $L_2(\Sigma_3)$.  Dead state, numbered 4, omitted.}
\end{figure}
\end{center}
As the reader can now easily verify, the set of finite words accepted
are the prefixes of
$$(012)^* + (021)^* + (102)^* + (120)^* + (201)^* + (210)^* .$$
The corresponding set of infinite words is then
$$(012)^\omega + (021)^\omega + (102)^\omega + (120)^\omega + (201)^\omega + (210)^\omega .$$
\end{proof}

In subsequent theorems, we omit providing the shell scripts and outputs
from {\tt Grail}, but the reader can obtain them from
the second author's web page \\
\centerline{\url{https://cs.uwaterloo.ca/~shallit/papers.html} \ .}

\begin{theorem}
There exists an aperiodic infinite word $\mathbf{w}$ over
$\Sigma_3$ such that if $x$ is a factor of $\mathbf{w}$ and $|x| \geq 3$,
then $x^R$ is not a factor of $\mathbf{w}$.
\label{thm2}
\end{theorem}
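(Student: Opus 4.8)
The plan is to follow the general strategy laid out in Section~5. First I would use \texttt{Grail} to construct the minimal DFA $M'$ for $L_3(\Sigma_3)$ starting from Eq.~\eqref{lprime}, or rather from its de Morgan dual: for each $x \in \Sigma_3^3$ with $x$ lexicographically $\leq x^R$, build a small DFA for $\overline{\Sigma_3^* \, x \, \Sigma_3^*} \cup \overline{\Sigma_3^* \, x^R \, \Sigma_3^*}$, and then intersect these automata term by term, minimizing after each step so the intermediate automata stay manageable. Since $L_3(\Sigma_3)$ is factor-closed, the resulting minimal DFA has a single non-accepting state, necessarily a dead state; deleting it yields an automaton $A'$ whose infinite paths are exactly the infinite words of $L_3(\Sigma_3)$, exactly as in the proof of Theorem~\ref{tern_per}.

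Next I would analyze the transition digraph of $A'$. By Theorem~\ref{adh-thm}(c) it suffices to show that $L_3(\Sigma_3)$ has exponential growth, and by the birecurrence criterion of \cite{Gawrychoski&Krieger&Rampersad&Shallit:2010} it is enough to locate a state $q$ of $A'$ together with two noncommuting words $x_0, x_1$ that each take $q$ back to $q$. A depth-first search through $A'$ readily exhibits such a state and such a pair of return words. Given $x_0$ and $x_1$, Theorem~\ref{per} guarantees that replacing each $0$ by $x_0$ and each $1$ by $x_1$ in an aperiodic binary word --- for instance the Thue--Morse word $\mathbf{t} = 01101001\cdots$ --- produces an aperiodic infinite word, which by construction labels an infinite path in $A'$ and hence lies in $L_3(\Sigma_3)$. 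This gives an explicit aperiodic $\mathbf{w}$ of the required kind.

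The main obstacle is the automaton construction rather than the underlying mathematics: a naive conversion of the regular expression in Eq.~\eqref{lprime} into a DFA exhibits exponential blowup, so the order of operations matters --- intersecting the complemented pieces one at a time, minimizing aggressively between steps, and using the $x \leq x^R$ reduction to halve the number of factors considered. Once $A'$ is available, verifying aperiodicity of the witness is immediate from Theorem~\ref{per}, and membership in $L_3(\Sigma_3)$ amounts to the observation that the substituted word labels a path through $A'$. As an optional strengthening, one may also compute the minimal polynomial (or the dominant eigenvalue) of the matrix $M$ associated to $A'$, as sketched above, to confirm that the growth is genuinely exponential and thus that aperiodic words are not merely present but uncountably many; for the statement of the theorem, however, a single explicit example suffices.
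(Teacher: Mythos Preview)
Your plan is exactly the approach the paper takes: build the DFA for $L_3(\Sigma_3)$ via the de~Morgan dual of Eq.~\eqref{lprime}, delete the dead state, locate a birecurrent state, and apply Theorem~\ref{per} to an aperiodic binary word. The paper carries this out and reports the concrete data you leave implicit---the final automaton has $20$ states, state~$9$ is birecurrent with return words $x_0 = 0012$ and $x_1 = 0112$---so your proposal is correct but would need those computed specifics filled in to stand as a proof rather than a plan.
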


\begin{proof}
As above, we create the DFA for $L_3(\Sigma_3)$.
Although the intermediate automata have as many as 1033 states,
the final automaton has only 20 states (including the dead state).
It is depicted below.
\begin{center}
\begin{figure}[H]
\includegraphics[width=6in]{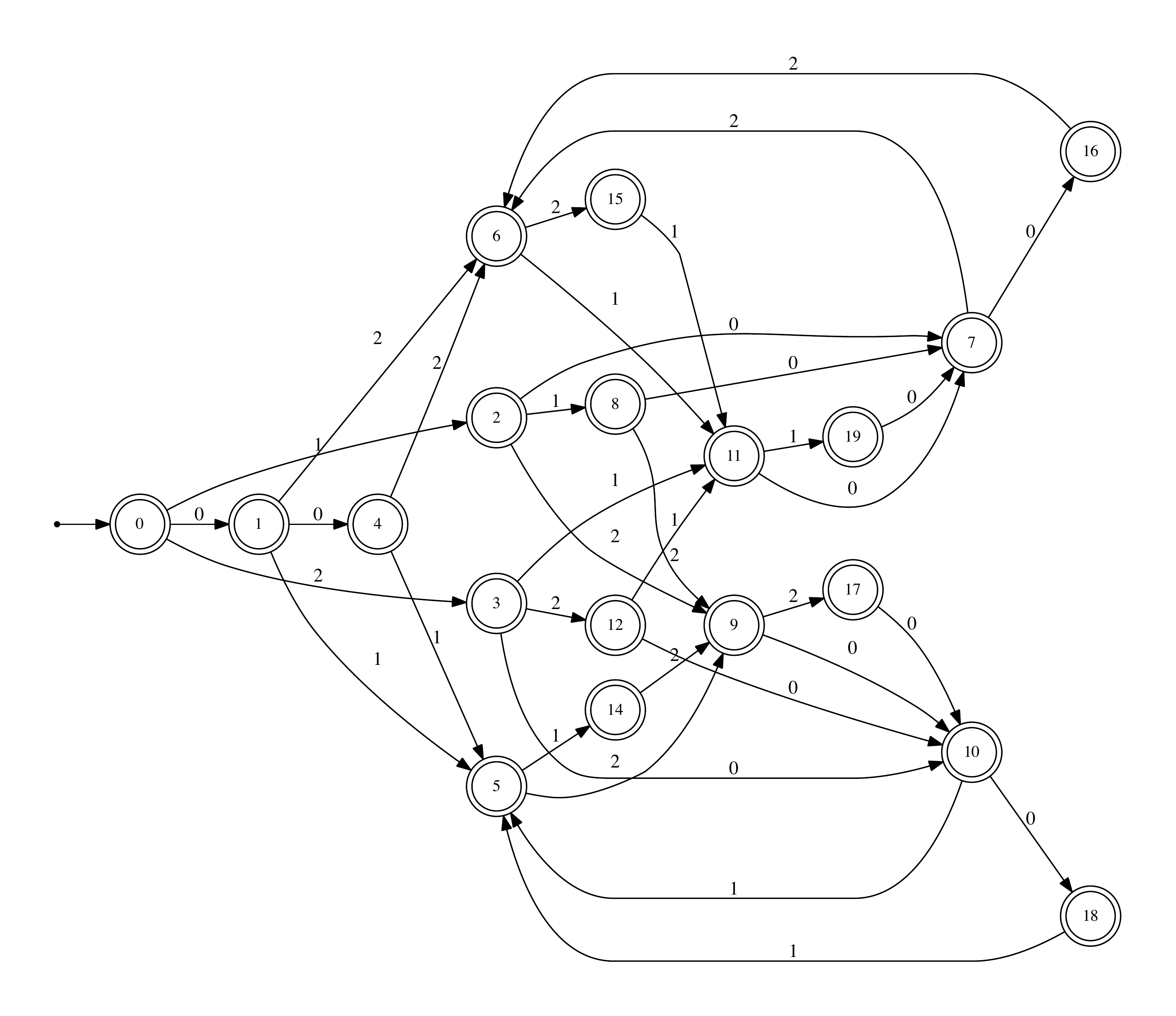}
\caption{Automaton for $L_3(\Sigma_3)$.  Dead state numbered 13, omitted.}
\end{figure}
\end{center}
Then, for example, state $9$ is a birecurrent state, with the
corresponding cycles labeled by $x_0 = 0012$ and $x_1 = 0112$.
It follows that every word in $\{ 0012, 0112 \}^\omega$ avoids
reversed factors of length $\ell \geq 3$, and uncountably many of these
are aperiodic.
\end{proof}

Let the Fibonacci numbers be defined, as usual, 
by the recurrence $F_n = F_{n-1} + F_{n-2}$, together
with the initial conditions $F_0 = 0$ and $F_1 = 1$.

\begin{theorem}
The number $r_{33}(n)$ of length-$n$
words in $L_3(\Sigma_3)$ 
is $6F_{n+1}$ for $n \geq 3$.
\label{m33}
\end{theorem}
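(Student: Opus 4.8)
The plan is to read the count off the automaton $A'$ for $L_3(\Sigma_3)$ already produced in the proof of Theorem~\ref{thm2} (the $20$-state DFA with the dead state $13$ deleted, leaving $19$ states, \emph{all} of which are accepting). Let $M$ be the $19\times 19$ transition-count matrix of $A'$, defined exactly as in the transfer-matrix construction described above (and, for the details of the method, in \cite{Fleischer&Shallit:2019b}), so that $(M^n)_{i,j}$ is the number of length-$n$ words taking state $q_i$ to state $q_j$. Because every surviving state is final, $r_{33}(n) = e_{q_0}\,M^n\,\mathbf 1$, where $e_{q_0}$ is the row vector selecting the start state and $\mathbf 1$ is the all-ones column vector.

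First I would compute the minimal polynomial $m(x)$ of $M$; the claim to extract is that $x^2-x-1$ divides $m(x)$ and, more usefully, that the linear functional $n\mapsto e_{q_0}M^n\mathbf 1$ satisfies the Fibonacci recurrence once the nilpotent/transient part of $M$ has been passed, i.e.
\[
 r_{33}(n) = r_{33}(n-1) + r_{33}(n-2) \qquad (n\ge 5).
\]
This is a finite linear-algebra computation over $\mathbb{Z}$. Next I would pin down the two base cases, either by enumerating short paths in $A'$ or by a brief direct count: a length-$3$ word $abc$ over $\Sigma_3$ lies in $L_3(\Sigma_3)$ iff $a\ne c$, so $r_{33}(3)=3\cdot 3\cdot 2 = 18 = 6F_4$; and a short analysis of the constraints imposed by the factors $abc,\ bcd,\ abcd$ and their reversals gives $r_{33}(4) = 30 = 6F_5$. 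Since $6F_{n+1} = 6F_n + 6F_{n-1}$ for all $n$ and $6F_4=18$, $6F_5=30$ agree with the base cases, a one-line induction using the recurrence above yields $r_{33}(n) = 6F_{n+1}$ for all $n\ge 3$.

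The main obstacle is the step of showing that the \emph{pure} Fibonacci recurrence holds from exactly $n=5$ onward — equivalently, that the closed form is valid already at $n=3$ and not merely asymptotically. Note that the recurrence genuinely fails at $n=3,4$ (e.g.\ $r_{33}(3)=18\ne r_{33}(2)+r_{33}(1)=9+3$), so one must control the ``preperiod'' coming from the non-semisimple part of $M$ and verify that $m(x)$ contributes no additional eventually-periodic or higher-order transient terms to the functional $e_{q_0}(\,\cdot\,)\mathbf 1$; concretely this reduces to checking either that $m(x)=x^{j}(x^2-x-1)$ for a small $j$, or that any extra factors of $m(x)$ are killed by $e_{q_0}(\,\cdot\,)\mathbf 1$. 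This is routine but must be carried out carefully, since a slip here would corrupt only finitely many values of $r_{33}(n)$ and could easily escape notice. (As a sanity check, the Fibonacci recurrence forces the growth rate of $L_3(\Sigma_3)$ to be the golden ratio, which is consistent with the exponential growth guaranteed by the birecurrent state $9$ found in Theorem~\ref{thm2}.)
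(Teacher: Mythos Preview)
Your approach is essentially the paper's: build the transition matrix, compute its minimal polynomial, and deduce the Fibonacci recurrence for $r_{33}$. The paper works with the full $20\times 20$ matrix (dead state included) and finds
\[
p(X)=X^{3}(X-3)(X^{2}-X-1)(X^{4}+X^{3}+2X^{2}+2X+1),
\]
then expresses $r_{33}(n)$ as a linear combination of $n$th powers of the zeros of $p$ and solves for the coefficients from the first ten values. For your $19\times 19$ matrix the factor $(X-3)$ disappears, but the quartic $X^{4}+X^{3}+2X^{2}+2X+1$ remains: your first alternative, that $m(x)=x^{j}(x^{2}-x-1)$, is therefore false, and you must take the second route and verify that this quartic is annihilated by the functional $e_{q_0}(\,\cdot\,)\mathbf 1$. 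That is exactly what the paper's linear-system step accomplishes (the coefficients attached to the four non-Fibonacci roots come out zero), so your plan goes through, but be aware in advance that the ``extra factors'' case is the one that actually occurs.

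One small correction to your transient analysis: with the quartic present and $X^{3}$ in the minimal polynomial, the annihilator argument a priori only certifies the Fibonacci recurrence once $n$ exceeds the degree of the transient part you must discard; you then recover validity down to $n=3$ by checking enough initial values directly (as the paper does with its first ten terms), not merely the two values $r_{33}(3)=18$ and $r_{33}(4)=30$.
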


\begin{proof}
We create the $20 \times 20$ matrix $M$
corresponding to the transitions of
$L_3 (\Sigma_3)$.  Its minimal polynomial is
$p(X) = X^3 (X-3)(X^2 - X - 1)(X^4 + X^3 + 2X^2 + 2X+ 1)$.
It now follows that $r_{33} (n)$ can be expressed as
a linear combination of $n$'th powers of the zeros of
$p(X)$.  We can determine the coefficients of this
linear combination by solving a linear system, using
the computed values of the first 10 terms of $r_{33} (n)$.
From this, the result easily follows.
\end{proof}

\subsection{Alphabet size 2}

\begin{theorem}
Let $n \leq 4$ and let $w$ be a word over $\Sigma_2$ such that
if $x$ is a factor of $w$ and $|x| \geq n$, then $x^R$ is not
a factor of $w$.  Then $|w| \leq 8$.
\end{theorem}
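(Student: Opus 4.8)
The plan is to reduce the statement to a single finiteness computation. First observe that for $\ell \le \ell'$ we have $P_\ell(w) \implies P_{\ell'}(w)$, since the hypothesis of $P_\ell$ ranges over a larger collection of factors of $w$; hence $L_\ell(\Sigma_k) \subseteq L_{\ell'}(\Sigma_k)$, and in particular $L_2(\Sigma_2) \subseteq L_3(\Sigma_2) \subseteq L_4(\Sigma_2)$. Therefore it suffices to prove that every $w \in L_4(\Sigma_2)$ satisfies $|w| \le 8$; the cases $n = 2$ and $n = 3$ then follow a fortiori.

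Next I would construct the minimal DFA $M'$ for $L_4(\Sigma_2)$ by exactly the method used in Theorems~\ref{tern_per} and~\ref{thm2}: appeal to Corollary~\ref{two} and the de Morgan form of Eq.~\eqref{lprime}, build with {\tt Grail} a small DFA for $\overline{\Sigma_2^* \, x \, \Sigma_2^*} \cup \overline{\Sigma_2^* \, x^R \, \Sigma_2^*}$ for each length-$4$ word $x \in \Sigma_2^4$ with $x$ lexicographically at most $x^R$, and intersect these DFA's term by term, minimizing after each intersection. Because $L_4(\Sigma_2)$ is factor-closed, $M'$ has a unique non-accepting state, which is necessarily a dead state; deleting it yields the trimmed automaton $A'$ whose finite and infinite paths from the start state are precisely the words of $L_4(\Sigma_2)$.

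Finally, run a depth-first search on the transition digraph of $A'$. I expect this digraph to be acyclic, so that $L_4(\Sigma_2)$ is finite and the length of any $w \in L_4(\Sigma_2)$ is bounded by the number of edges on the longest path leaving the start state; reading this quantity off $A'$ should give $8$. To confirm that the bound is attained one exhibits an explicit word, for instance $w = 00010111$, and checks directly that $w$ has no palindromic factor of length $\ge 4$ and that no factor of $w$ of length $\ge 4$ has its reversal occurring in $w$, so that $w \in L_4(\Sigma_2)$ with $|w| = 8$.

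I do not anticipate a conceptual obstacle here: once $A'$ is in hand, both the finiteness check and the longest-path computation are routine. The only practical difficulty, already flagged in the discussion preceding Theorem~\ref{tern_per}, is that the intermediate automata produced during the term-by-term intersection can temporarily grow to thousands of states, so one must rely on aggressive minimization and a sensible ordering of the intersections; alternatively, the finiteness of $L_4(\Sigma_2)$ and the exact bound $8$ can be verified by hand on the (small) final automaton $A'$, independently of how it was produced.
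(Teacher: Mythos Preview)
Your proposal is correct and follows essentially the same approach as the paper: build the DFA for $L_4(\Sigma_2)$ via the de~Morgan/intersection recipe, delete the dead state, and read off that the longest accepted word has length~$8$. Your explicit reduction from $n\le 4$ to $n=4$ via the inclusion $L_\ell(\Sigma_2)\subseteq L_{\ell'}(\Sigma_2)$ for $\ell\le\ell'$, and the witness $00010111$ for tightness, are welcome additions that the paper leaves implicit.
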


\begin{proof}
As above, we create the DFA for $L_4(\Sigma_2)$.
It is depicted below, and we easily see that the longest words
accepted are of length $8$.
\begin{center}
\begin{figure}[H]
\includegraphics[width=6in]{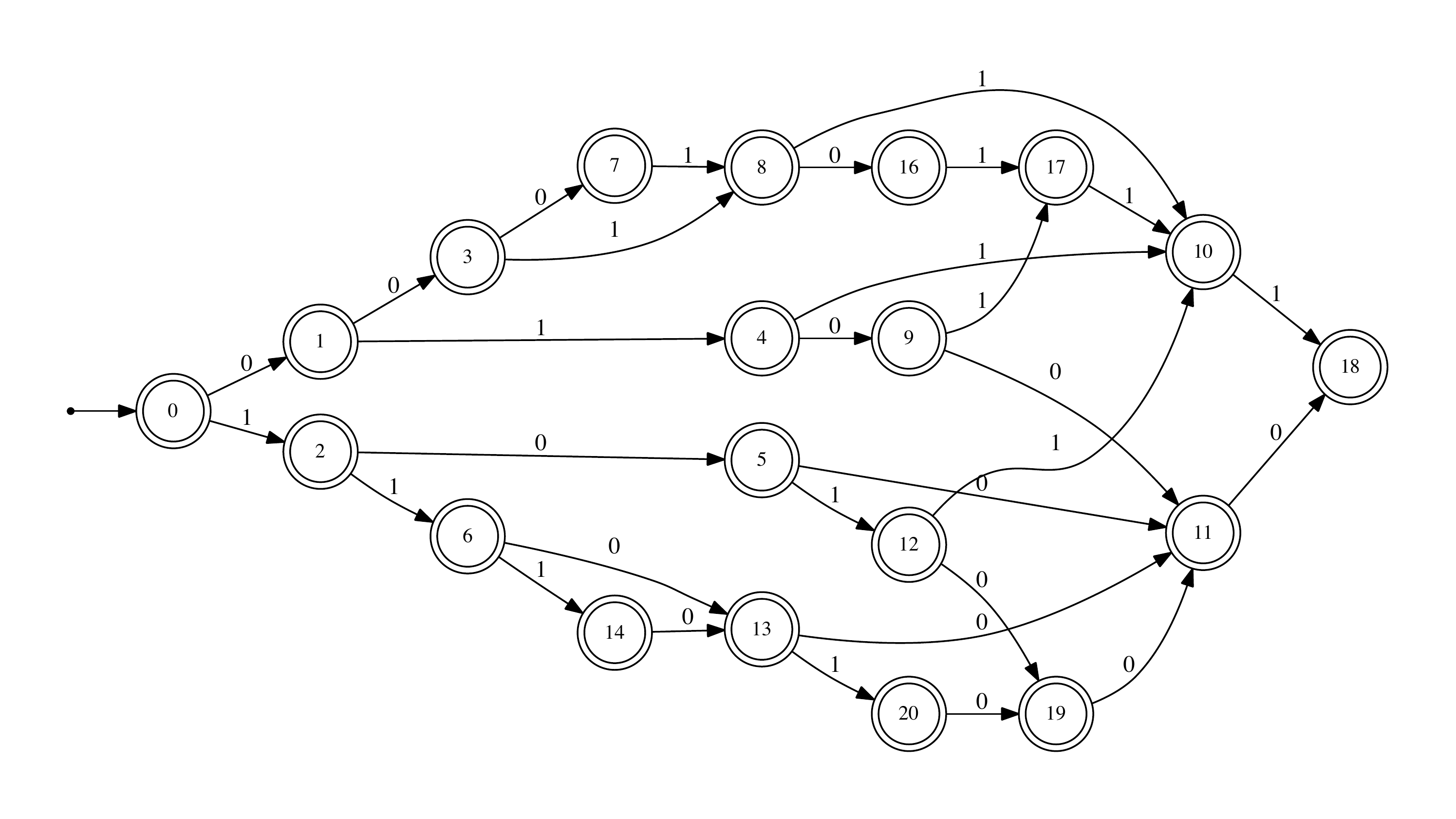}
\caption{Automaton for $L_4(\Sigma_2)$.  Dead state, numbered 15, omitted.}
\end{figure}
\end{center}
\end{proof}

\begin{theorem}
\label{geq5}
There exists an infinite word $\mathbf{w}$ over $\Sigma_2$ such that if
$x$ is a factor of $\mathbf{w}$ and $|x| \geq 5$, then $x^R$ is not a
factor of $\mathbf{w}$.  
\end{theorem}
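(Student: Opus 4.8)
The plan is to apply exactly the machinery assembled in Sections~\ref{persec} and~\ref{sec3} together with the automaton construction used in the preceding applications. First I would use {\tt Grail} to build a minimal DFA $M'$ for $L_5(\Sigma_2)$. To dodge the exponential blowup noted above, I would not complement the regular expression in Eq.~\eqref{lprime} directly; instead I would use the de Morgan reformulation
$$ L_5(\Sigma_2) = \bigcap_{x} \left( \, \overline{\Sigma_2^* \, x \, \Sigma_2^*} \cup \overline{\Sigma_2^* \, x^R \, \Sigma_2^*} \, \right), $$
where $x$ ranges over the $2^5 = 32$ binary words of length $5$ — and, as observed, only over those $x$ with $x \leq_{\text{lex}} x^R$ — building a small minimal DFA for each term and intersecting the terms one at a time, re-minimizing after each step.

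Once $M'$ is in hand, I would delete its unique non-accepting (dead) state to obtain the trimmed automaton $A'$, so that, as explained following Corollary~\ref{two}, the infinite words of $L_5(\Sigma_2)$ are precisely the labels of infinite paths through the transition digraph of $A'$. It then suffices to observe that $L_5(\Sigma_2)$ is infinite: concretely, that the digraph of $A'$ contains a directed cycle reachable from the start state. Given any such cycle, the purely periodic word it spells already witnesses the theorem. Equivalently, by Theorem~\ref{adh-thm} the mere infinitude of $L_5(\Sigma_2)$ forces $\adh(L_5(\Sigma_2)) \neq \emptyset$.

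For an explicit witness — and, in the spirit of the present paper, ideally an \emph{aperiodic} one, which is a strictly stronger statement than asked — I would scan $A'$ for a birecurrent state $q$, that is, a state admitting two noncommuting return words $x_0, x_1$. By Theorem~\ref{per}, substituting $0 \mapsto x_0$ and $1 \mapsto x_1$ into any aperiodic binary word (for instance the Thue--Morse word $\bf t$) yields an aperiodic element of $L_5(\Sigma_2)^\omega$; if it should turn out that $A'$ has no birecurrent state, then any single return word of a recurrent state still supplies a periodic witness, which is enough for the theorem as stated.

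The main obstacle is purely computational rather than conceptual: during the term-by-term intersection the intermediate automata can be large — recall that for $L_3(\Sigma_3)$ intermediate automata reached over a thousand states — so one must minimize after every intersection step and rely on {\tt Grail} to return a small final DFA. Everything else is the reduction already in place: a reachable cycle (respectively, a birecurrent state) in $A'$ is exactly what is needed, and exhibiting such a feature in the computed automaton is then a matter of inspection.
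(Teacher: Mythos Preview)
Your approach matches the paper's: construct the minimal DFA for $L_5(\Sigma_2)$ by iterated intersection (the paper reports intermediate automata of up to 598 states and a final DFA of 59 states including the dead state), delete the dead state, and read off an infinite path. One heads-up: your search for a birecurrent state will come up empty --- the Remark immediately following the paper's proof observes that $A'$ has none, so every infinite binary word avoiding reversed factors of length $\geq 5$ is ultimately periodic --- and you will land on your stated fallback of a periodic witness; the paper exhibits $000011(010011)^\omega$.
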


\begin{proof}
As above, we create the DFA for $L_5(\Sigma_2)$.
Although the intermediate automata produced have as many as 598
states, the final DFA has only 59 states (including the dead state).
\begin{center}
\begin{figure}[H]
\includegraphics[width=6in]{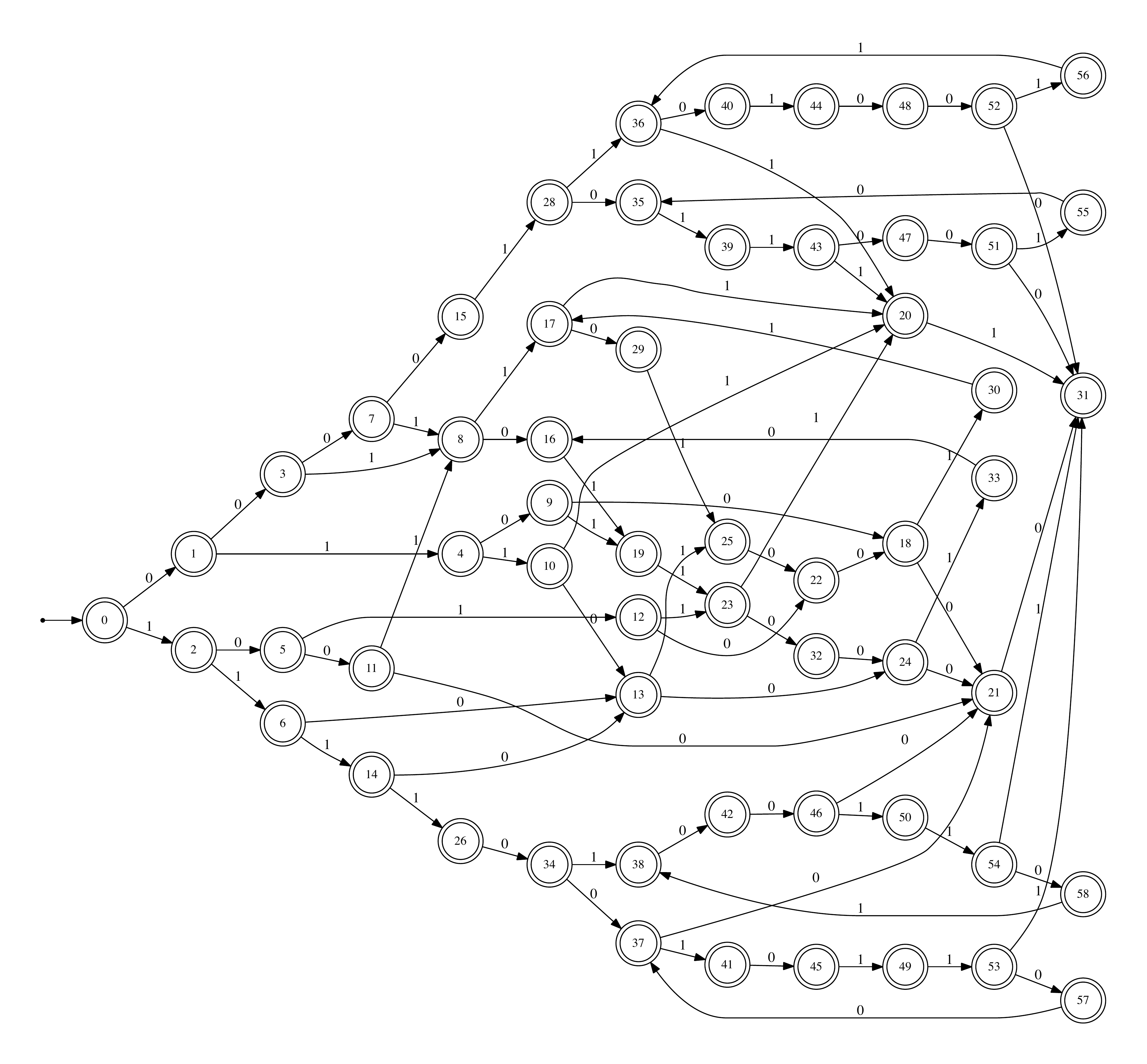}
\caption{Automaton for $L_5(\Sigma_2)$.  Dead state, numbered 27, omitted.}
\end{figure}
\end{center}
Then, for example, $000011(010011)^\omega$ is an infinite path in this
DFA.
\end{proof}

\begin{remark}
We can see by inspection that there are no birecurrent states in this
automaton.  Hence all infinite words satisfying the property of
Theorem~\ref{geq5} are periodic.   
\end{remark}

\begin{theorem}
The number $r_{25} (m)$ of length-$m$ words in $L_5 (\Sigma_2)$ 
is given by
$$
r_{25} (m) = \begin{cases}
	30, & \text{if $m \equiv \modd{0} {6}$}; \\
	32, & \text{if $m \equiv \modd{1,2,3} {6}$ and $m \geq 7$}; \\
	34, & \text{if $m \equiv \modd{4} {6}$ and $m \geq 10$}; \\
	36, & \text{if $m \equiv \modd{5} {6}$ and $m \geq 11$} .
	\end{cases}
$$
\end{theorem}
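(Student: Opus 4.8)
The plan is to follow the same matrix-based recipe used in the proof of Theorem~\ref{m33}. First I would take the minimal DFA $A'$ for $L_5(\Sigma_2)$ constructed above (the $59$-state automaton with its dead state, numbered $27$, removed), leaving $58$ live states, all of them accepting. From the transition function I would form the $58 \times 58$ matrix $M$ with $M[i,j] = |\{a \in \Sigma_2 : \delta(q_i,a) = q_j\}|$, so that $M^m[i,j]$ counts the length-$m$ words taking $q_i$ to $q_j$. Since every live state is accepting, $r_{25}(m)$ is the sum of the entries in the start-state row of $M^m$, i.e.\ $r_{25}(m) = u^T M^m \mathbf{1}$, where $u$ is the indicator vector of the start state and $\mathbf{1}$ is the all-ones vector.

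Next I would compute the minimal polynomial $p(X)$ of $M$. Because the Remark preceding the theorem tells us there are no birecurrent states, $L_5(\Sigma_2)$ has at most polynomial growth, and in fact the word counts turn out to be bounded; consequently we expect $p(X)$ to factor as $X^{a} \cdot q(X)$, where $q(X) \mid X^6 - 1$ (the eventual period being $6$) and $X^a$ absorbs the transient, non-recurrent part of the automaton. Thus every eigenvalue of $M$ is either $0$ or a sixth root of unity, each root of unity occurring with multiplicity one, so for $m$ larger than the transient length $a$ we may write
\[
r_{25}(m) = \sum_{j} c_j\, \zeta_j^m ,
\]
where the $\zeta_j$ run over the roots of $q(X)$; this is a function of $m$ that is purely periodic with period dividing $6$. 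The coefficients $c_j$ are then pinned down by solving a small linear system against the first several (say, ten) computed values of $r_{25}(m)$.

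Having the eventually periodic closed form, the remaining work is bookkeeping: evaluate the six residue values of the closed form, observe that they are $30,32,32,32,34,36$ according as $m \bmod 6$ equals $0,1,2,3,4,5$, and then check by direct enumeration — a depth-first traversal of $A'$, or simply reading lengths off the automaton — the finitely many small values of $m$ below the stated thresholds, confirming that $r_{25}$ does not yet agree with its limiting value in those cases. This certifies the side conditions $m \ge 7$, $m \ge 10$, $m \ge 11$ appearing in the piecewise description, and shows the remaining short-word counts are as claimed.

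The step I expect to be the only real point of care is this last one: matching the generic periodic formula to the genuinely exceptional initial terms and certifying the precise cutoff $m$ at which each congruence class settles into its limiting value. Everything else — forming $M$, extracting its minimal polynomial, and solving the linear system for the $c_j$ — is routine linear algebra that the software carries out automatically, exactly as in Theorem~\ref{m33}.
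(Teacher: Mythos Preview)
Your proposal is correct and follows essentially the same approach as the paper: build the transition matrix, compute its minimal polynomial, and express $r_{25}(m)$ as a linear combination of $m$'th powers of its roots. The only cosmetic difference is that the paper keeps the dead state and works with the full $59\times 59$ matrix, whose minimal polynomial is $X^6(X^6-1)(X-2)$; the extra factor $(X-2)$ comes from that dead state and plays no role in the final formula, so your $58\times 58$ variant yields the same conclusion.
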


\begin{proof}
As in the proof of Theorem~\ref{m33}, we can build the $59 \times 59$
matrix corresponding to the automaton, and determine its minimal
polynomial $p(X) = X^6 (X^6-1)(X-2)$.  As before we can express
$r_{25} (m)$ as a linear combination of the $m$'th powers of the
zeros of $p$.  The result now easily follows.
\end{proof}

\begin{theorem}
There exists an aperiodic infinite word $\mathbf{w}$ over
$\Sigma_2$ such that if $x$ is a factor of $\mathbf{w}$ and $|x| \geq 6$,
then $x^R$ is not a factor of $\mathbf{w}$.
\label{thm26}
\end{theorem}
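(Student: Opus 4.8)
The plan is to follow exactly the automaton-based strategy used in the proofs of Theorems~\ref{thm2} and \ref{geq5}, now with $\ell = 6$ and $k = 2$. First I would construct the minimal DFA $M'$ for $L_6(\Sigma_2)$. Concretely, using the de Morgan reformulation
$$L_6(\Sigma_2) = \bigcap_{x \in \Sigma_2^6} \left( \, \overline{\Sigma_2^* \, x \, \Sigma_2^*} \cup \overline{\Sigma_2^* \, x^R \, \Sigma_2^*} \, \right),$$
I would build, for each length-$6$ word $x$ with $x$ lexicographically no larger than $x^R$, a small minimal DFA for the corresponding term of the intersection, and then intersect these one at a time with {\tt Grail}, minimizing after each step. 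As in the earlier cases, the intermediate automata may be sizeable, but the final minimized DFA should be of manageable size. Deleting its unique dead state yields an automaton $A'$ whose infinite paths are precisely the infinite words over $\Sigma_2$ avoiding reversed factors of length $\geq 6$.

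Next I would run a birecurrence test on $A'$, using the algorithm of \cite{Gawrychoski&Krieger&Rampersad&Shallit:2010}: search for a state $q$ and two noncommuting words $x_0, x_1$ with $\delta(q,x_0) = \delta(q,x_1) = q$. Equivalently one checks whether $L(A')$ has exponential growth; Theorem~\ref{adh-thm}(c) then guarantees that $\adh(L_6(\Sigma_2))$ contains uncountably many aperiodic words, but to exhibit an \emph{explicit} example it is cleaner to produce a birecurrent state directly. I expect the computation to turn up such a state together with a concrete pair $x_0, x_1$ (presumably short words, by analogy with the $\ell = 3$, $k = 3$ case, where $x_0 = 0012$ and $x_1 = 0112$).

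Given such $x_0, x_1$, I would finish as follows. Define the morphism $\gamma(i) = x_i$ for $i \in \{0,1\}$. Since $x_0$ and $x_1$ do not commute, Theorem~\ref{per} applies: if $\mathbf{a} \in \{0,1\}^\omega$ is aperiodic, then so is $\gamma(\mathbf{a})$. Take $\mathbf{a} = \mathbf{t}$, the Thue-Morse word, which is aperiodic, and set $\mathbf{w} = \gamma(\mathbf{t})$. Because $q$ is returned to after reading each block $x_0$ or $x_1$, the word $\mathbf{w}$ labels an infinite path through $A'$; hence every factor of $\mathbf{w}$ lies in $L_6(\Sigma_2)$, i.e., no reversal of a length-$\geq 6$ factor of $\mathbf{w}$ is a factor of $\mathbf{w}$. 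Combined with the aperiodicity of $\mathbf{w}$, this proves the theorem.

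The main obstacle I anticipate is purely computational: controlling the blow-up in the term-by-term intersection for $\ell = 6$ (both the number of forbidden patterns and the sizes of the intermediate DFAs grow with $\ell$), and then certifying correctness of the resulting automaton. Once $A'$ is in hand, locating a birecurrent state is routine graph search, and the passage from $x_0, x_1$ to the aperiodic word $\mathbf{w}$ is immediate from Theorem~\ref{per}.
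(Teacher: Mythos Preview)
Your high-level strategy---build the DFA for $L_6(\Sigma_2)$, delete the dead state, locate a birecurrent state $q$ with noncommuting return words $x_0,x_1$, and apply $\gamma$ to the Thue--Morse word---matches the paper exactly, and your concluding paragraph is essentially the paper's proof verbatim.

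The gap is in the construction step. You propose to reuse the de~Morgan term-by-term intersection from Theorems~\ref{thm2} and~\ref{geq5}, and you correctly flag blow-up as the main risk. The paper reports that this is not just a risk but a showstopper: for $\ell=6$, $k=2$ the intermediate automata exceed hundreds of thousands of states and the computation does not finish in reasonable time. So your proposed route to $A'$ does not go through as stated.

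The paper's workaround is a direct product-style construction rather than the regular-expression route: each state records the set of length-$6$ factors seen so far together with the last $5$ symbols read, with the obvious update on reading a new symbol; accepting states are those whose factor set contains no pair $\{x,x^R\}$. A breadth-first exploration of the reachable part of this (a priori $63\cdot 2^{64}$-state) machine yields $63705$ non-dead states, which {\tt Grail} then minimizes to $7761$. In that automaton state $980$ is birecurrent with $x_0 = 0001011$ and $x_1 = 1001011$, after which your final paragraph applies verbatim. In short: keep your outline, but replace the intersection-of-complements construction with the subset-of-factors construction.
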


Here our previous approach does not succeed in a reasonable length
of time, because the intermediate automata grow too large 
(at least hundreds of thousands of states).
We describe an alternative approach that produces the desired DFA
for $L_6(\Sigma_2)$.  

We can construct a DFA for $L_\ell(\Sigma_k)$ directly as follows:  
it suffices to record, in the state, the subset of length-$n$
factors seen so far, and the last $n-1$ symbols seen (or shorter
prefix, if $n-1$ symbols have not yet been seen).   Upon reading
a new symbol, the DFA updates the subset of factors and the last
$n-1$ symbols seen.  So the total number of states
is $2^{k^n} \cdot (1+k +k^2 + \cdots + k^{n-1})$.   
The final states correspond to those
subsets not containing both a word and its reversal.  

For our particular case of $k = 2$, $n = 6$, this gives a
DFA with $63 \cdot 2^{64}$ states, which is evidently too large to
manipulate effectively.   However, many of these states will
be unreachable from the
start state.  Instead, we can construct the reachable states
in a breadth-first manner, using a queue.   We wrote a Dyalog APL
program to construct the automaton; it has 63705 states (not 
including the dead state).   We then minimized this automaton
using {\tt Grail}, and we obtained an automaton $A$ with 7761 states
(not including the dead state).  This automaton is much too big to
display here, but can be obtained from the website of the second
author.

State 980 is a birecurrent state, with the corresponding cycles
labeled by $0001011$ and $1001011$.    Now we can complete
the proof of Theorem~\ref{thm26}.

\begin{proof}
As before, we can produce an explicit example of an aperiodic infinite
word satisfying the given conditions by applying the morphism
$0 \rightarrow 0001011$, $1 \rightarrow 1001011$ to any aperiodic
binary word, such as the Thue-Morse word.
\end{proof}

As suggested by the size of the minimal automaton $A$,
it turns out that the structure of the language
$L_6(\Sigma_2)$ is very complicated.    A natural problem is to
give a recurrence enumerating the number $r_{26}(n)$ of length-$n$ words
in $L_6(\Sigma_2)$.   Even this is not so easy; it turns out that
$r_{26}(n)$ satisfies a linear recurrence of order 195.

We describe how this can be proved.  The first step is to compute
the minimal polynomial of the matrix $M$ corresponding to $A$.
We were not able to compute this with {\tt Maple} 2017 (X86 64 LINUX),
so we turned to the software {\tt LinBox} \cite{Dumas:2019}.  It computed
the minimal polynomial as the following polynomial of degree 239:
\begin{align*}
& X^{18} (X - 2) (X - 1) (X + 1) (X^2 + 1) (X^4 + 1) (X^2 - X + 1) (X^2 + X + 1)(X^4 - X^2 + 1) \times \\
&(X^6 + X^3 + 1) (X^8 - X^2 - 1) (X^8 + X^2 - 1) (X^9 - X^2 - 1) (X^{10} - X^2 - 1) (X^{12} - X^2 - 1) \times \\
&(X^{12} - X^3 - 1) (X^{12} - X^4 - 1) (X^{12} - X^5 - 1) (X^{12} - X^6 - 1) (X^4 - X^3 + X^2 - X + 1) \times \\
&(X^4 - X^3 + X^2 - X + 1) (X^4 + X^3 + X^2 + X + 1) (X^7 - X^6 + X^4 - X^3 - 1)\times \\
& (X^{10} - X^3 - X^2 - X - 1) 
(X^{10} - X^8 + X^6 - X^4 - 1) (X^{16} - X^9 - X^7 - X^4 + 1) \times \\
& (X^{16} - X^{10} - X^6 - X^4 + 1) 
(X^{10} - X^4 - 2 X^3 - 2 X^2 - 2 X - 1)
(X^{10} - X^8 + X^6 - 2 X^4 + X^2 - 1) \times \\
&(X^6 + X^5 + X^4 + X^3 + X^2 + X + 1) (X^{10} - X^8 + X^6 - X^4 - X^3 + X^2 - 1).
\end{align*} 
From this one can compute a linear recurrence of order $239$ for the
sequence $r_{26}(n)$.  However, using the techniques from
\cite{Fleischer&Shallit:2019b}, we can find the optimal linear
recurrence, which arises from the following
degree-$195$
divisor of the
minimal polynomial:
\begin{align*}
& (X - 1) (X^2 + 1) (X^2 - X + 1) (X^2 + X + 1) (X^4 - X^2 + 1) (X^8 - X^2 - 1) (X^8 + X^2 - 1) \times \\
& (X^9 - X^2 - 1) (X^{10} - X^2 - 1) (X^{12} - X^2 - 1) (X^{12} - X^3 - 1) (X^{12} - X^4 - 1) (X^{12} - X^5 - 1) \times \\
& (X^{12} - X^6 - 1) (X^7 - X^6 + X^4 - X^3 - 1) (X^{10} - X^3 - X^2 - X - 1) (X^{10} - X^8 + X^6 - X^4 - 1) \times \\
& (X^{16} - X^9 - X^7 - X^4 + 1) (X^{16} - X^{10} - X^6 - X^4 + 1) (X^{10} - X^4 - 2 X^3 - 2 X^2 - 2 X - 1)  \times \\
& (X^{10} - X^8 + X^6 - 2 X^4 + X^2 - 1) (X^{10} - X^8 + X^6 - X^4 - X^3 + X^2 - 1) .
\end{align*}
The largest real zero of this polynomial is
$\alpha \doteq 1.305429354041958520199761719029$,
where $\alpha$ is the positive real zero of 
$X^{10} - X^4 - 2 X^3 - 2 X^2 - 2 X - 1$.
It follows that $r_{26}(n) \sim c \alpha^n$, where $c \doteq 15.0313407$.

\begin{remark} 
The sequence $r_{26} (n)$ is sequence 
\seqnum{A330012} in the {\it On-Line Encyclopedia of Integer
Sequences} (OEIS) \cite{Sloane:2019}.
\end{remark}

\subsection{Alphabet size 4}

Inexplicably, the paper \cite{Rampersad&Shallit:2005} did not handle
the case of alphabet size $4$ (or more precisely, it only
considered the case of squarefree words).  We consider the alphabet
size $4$ case now.

\begin{theorem}
There are uncountably many infinite words over $\Sigma_4$ avoiding reversed
factors for length $\ell \geq 2$.
\end{theorem}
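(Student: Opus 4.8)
The plan is to follow the automaton-based strategy used for the earlier theorems in this section. First I would construct, via Theorem~\ref{one} and Corollary~\ref{two}, the minimal DFA $M'$ for $L_2(\Sigma_4)$ (with its unique dead state deleted), using the term-by-term intersection technique described above; here the terms are $\overline{\Sigma_4^* ab\, \Sigma_4^*} \cup \overline{\Sigma_4^* ba\, \Sigma_4^*}$ for each pair $a<b$ in $\Sigma_4$, together with $\overline{\Sigma_4^* aa\, \Sigma_4^*}$ for each letter $a$. I would then search the transition digraph of $M'$ for a birecurrent state: given a state $q$ together with noncommuting words $x_0, x_1$ each taking $q$ to $q$, Theorem~\ref{per} shows the morphism $0 \mapsto x_0$, $1 \mapsto x_1$ sends every aperiodic binary word to an aperiodic infinite word lying on an infinite path of $M'$, hence in $L_2(\Sigma_4)$, and by \cite[Thm.~2.3.5]{Shallit:2009} distinct binary inputs give distinct images; since there are uncountably many aperiodic binary words, the theorem follows (indeed Theorem~\ref{adh-thm}(c) already predicts this, once one knows that $L_2(\Sigma_4)$ has exponential growth).

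Because $|\Sigma_4| = 4$, a birecurrent state certainly exists, and one can in fact exhibit the witnesses by hand, avoiding the software entirely. The property $P_2(w)$ asserts exactly that $w$ contains no factor $aa$ and that, for distinct $a,b$, at most one of $ab$, $ba$ occurs in $w$; equivalently, the set of ordered length-$2$ factors of $w$ is a $2$-cycle-free orientation of a subgraph of $K_4$. It therefore suffices to fix a strongly connected tournament on $\{0,1,2,3\}$ having two distinct cycles through a common vertex. Concretely, take the tournament with edges $0\to 1,\ 1\to 2,\ 2\to 3,\ 3\to 0,\ 0\to 2,\ 1\to 3$. One checks it is strongly connected, and that vertex $0$ lies on the two triangles $0 \to 1 \to 3 \to 0$ and $0 \to 2 \to 3 \to 0$. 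Putting $x_0 = 013$ and $x_1 = 023$, we have $x_0 x_1 = 013023 \ne 023013 = x_1 x_0$, and every word of $\{x_0, x_1\}^\omega$ has length-$2$ factors only among $01, 02, 13, 23, 30$ — a block ends in $3$ and the next block begins with $0$, so the only factor spanning a boundary is $30$ — and none of the reversals $10, 20, 31, 32, 03$ occurs. Hence $\{x_0, x_1\}^\omega \subseteq L_2(\Sigma_4)$.

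To finish, apply Theorem~\ref{per} with $\gamma(0) = x_0$, $\gamma(1) = x_1$: the images of the uncountably many aperiodic words of $\{0,1\}^\omega$ are uncountably many pairwise distinct aperiodic words of $\Sigma_4^\omega$, all in $L_2(\Sigma_4)$, which is more than the statement requires. The only real difficulty is the routine bookkeeping at block boundaries — verifying that a block from $\{x_0, x_1\}$ can only be followed by the letter $0$, so that no forbidden length-$2$ factor is created — together with the trivial check that $x_0$ and $x_1$ do not commute. Should one instead prefer the DFA route, the corresponding obstacle is merely the possible intermediate blow-up during the automaton intersections, already handled in the earlier proofs by performing the intersection term by term.
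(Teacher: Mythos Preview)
Your proof is correct and takes a genuinely different, more elementary route than the paper. The paper builds (by machine) the minimal DFA for $L_2(\Sigma_4)$ --- it has $449$ states --- and then locates a birecurrent state with cycle labels $x_0 = 0123$ and $x_1 = 0120123$. You instead bypass the automaton entirely: you observe that $P_2(w)$ amounts to the set of ordered length-$2$ factors of $w$ forming an antisymmetric, loopless relation on $\Sigma_4$, and then exhibit by hand the pair $x_0 = 013$, $x_1 = 023$ whose arbitrary concatenations use only the five factors $01, 02, 13, 23, 30$, none of which has its reversal in that set. This is shorter, requires no software, and --- since $|x_0| = |x_1| = 3$ with $x_0 \neq x_1$ --- the injectivity of $\gamma$ on $\{0,1\}^\omega$ is immediate, so the appeal to \cite[Thm.~2.3.5]{Shallit:2009} is not even needed. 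The paper's computational approach, on the other hand, is uniform across all $(k,\ell)$ and produces the full DFA as a byproduct, which is then reused in the subsequent corollary to derive the recurrence and asymptotics for $r_{42}(n)$; your hand construction would not give that.
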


\begin{proof}
We construct the automaton as in Theorem~\ref{thm26}.
The resulting automaton has
449 states and is minimal.  State 360 is birecurrent, with
paths $x_0 = 0123$ and $x_1 = 0120123$.
\end{proof}

\begin{corollary}
Let $r_{42} (n)$ denote the number of length-$n$
words over $\Sigma_4$ avoiding reversed
factors of length $\ell \geq 2$.  Then
\begin{align*}
 & (r_{42}(0), r_{42}(1), \ldots, r_{42}(16)) = \\
 & \quad\quad\quad (1,4,12,24,48,96,168,264,456,720,1056,1656,2520,3600,5352,7944,11256)
 \end{align*}
and
\begin{align*}
r_{42}(n) &= r_{42}(n-1) + 5r_{42}(n-3) - 3r_{42}(n-4) - 2r_{42}(n-5) - 8r_{42}(n-6) + r_{42}(n-7) +  \\
& \quad 6r_{42}(n-8) + 5r_{42}(n-9) + 2r_{42}(n-10) - 4r_{42}(n-11) - 2r_{42}(n-12) 
\end{align*}
for $n \geq 17$.
Asymptotically we have $r_{42}(n) = C \cdot \alpha^n$, where
$\alpha \doteq 1.395336944$ is the largest
real zero of $X^4 -2X - 1$ and $C \doteq 71.2145756$.
\end{corollary}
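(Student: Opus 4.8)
The plan is to follow the same automaton-and-matrix approach used for Theorems~\ref{m33} and~\ref{thm26}, applied to the minimal DFA $A$ for $L_2(\Sigma_4)$ that was already constructed in the proof of the preceding theorem (the one with $449$ states). First I would form the $449 \times 449$ integer matrix $M$ with $M[i,j]$ equal to the number of letters taking state $q_i$ to state $q_j$; then $r_{42}(n) = \sum_{q_j \in F} M^n[q_0,j] = \mathbf{e}^{T} M^{n} \mathbf{f}$, where $\mathbf{e}$ is the unit vector of the start state and $\mathbf{f}$ the characteristic vector of the accepting states. Computing $M^{0}\mathbf{f}, M^{1}\mathbf{f}, \ldots, M^{16}\mathbf{f}$ (or performing a direct breadth-first count of length-$n$ words for $0 \le n \le 16$) and pairing with $\mathbf{e}$ then yields the claimed initial segment $(r_{42}(0), \ldots, r_{42}(16))$.

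For the recurrence, I would compute the minimal polynomial of $M$ exactly; as in Theorem~\ref{thm26}, the size of $M$ may force the use of {\tt LinBox} rather than {\tt Maple}. Following the method of \cite{Fleischer&Shallit:2019b}, the \emph{optimal} linear recurrence for $r_{42}$ is governed by the minimal polynomial of the scalar sequence $n \mapsto \mathbf{e}^{T} M^{n} \mathbf{f}$; this polynomial divides the minimal polynomial of $M$ and can be pinned down by applying the Berlekamp--Massey algorithm to a block of computed terms of length exceeding twice the degree of the matrix minimal polynomial (which guarantees correctness). I expect it to come out as $X^{5} p(X)$, where
\[ p(X) = X^{12} - X^{11} - 5X^{9} + 3X^{8} + 2X^{7} + 8X^{6} - X^{5} - 6X^{4} - 5X^{3} - 2X^{2} + 4X + 2 \]
is precisely the characteristic polynomial of the stated order-$12$ recurrence. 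Since the fact that $X^{5}p(X)$ annihilates the sequence is equivalent to the order-$12$ recurrence being valid for every $n \ge 17$, this step, together with the initial values, proves the recurrence assertion.

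For the asymptotics I would first carry out the routine verification that $X^{4} - 2X - 1$ divides $p(X)$, with quotient $g(X)$ of degree $8$, and then check two facts about the zeros of $X^{5}p(X)$: that $\alpha$, the largest real zero of the (irreducible) polynomial $X^{4} - 2X - 1$, is a \emph{simple} zero of $p$, so that it contributes a constant rather than a polynomially growing coefficient; and that $\alpha$ strictly exceeds the modulus of every other zero of $X^{5} p(X)$. Granting these, the standard partial-fraction expansion of the generating function $\sum_{n} r_{42}(n) x^{n} = \mathbf{e}^{T}(I - xM)^{-1}\mathbf{f}$ gives $r_{42}(n) = C\alpha^{n} + O(\rho^{n})$ for some $\rho < \alpha$, with $C$ the residue at $\alpha$; I would extract $C$ either from that residue or as the limit of $r_{42}(n)/\alpha^{n}$ computed from known terms, obtaining $C \doteq 71.2145756 \ne 0$.

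The step I expect to be the genuine obstacle is the domination claim in the last paragraph: showing that no zero of $p(X)$ has modulus $\ge \alpha$ is not merely a matter of substituting numbers, and would be handled either by a Perron--Frobenius argument applied to the strongly connected component of $A$ responsible for the exponential growth (identifying that component and checking its spectral radius is $\alpha$, simple) or by a certified numerical bound on all twelve zeros of $p(X)$; confirming $C \ne 0$ belongs here as well. The exact-arithmetic computation of the minimal polynomial of a $449 \times 449$ integer matrix is a secondary, purely computational hurdle, managed exactly as in Theorem~\ref{thm26}.
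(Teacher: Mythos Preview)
Your proposal is correct and follows essentially the same route as the paper: build the $449\times 449$ transition matrix $M$ from the already-constructed DFA, compute its minimal polynomial, pass to the minimal annihilator of the scalar sequence $n\mapsto \mathbf{e}^{T}M^{n}\mathbf{f}$, read off the recurrence, and extract the dominant root for the asymptotics. Two small points of calibration against what the paper actually reports. First, the $449$-state matrix is tame enough that {\tt Maple} handles the minimal polynomial directly; {\tt LinBox} was only needed for the $7761$-state binary case. The paper obtains
\[
X^{5}(X-1)(X-4)(X+1)(X^{2}+1)(X^{3}-2)(X^{4}-2X-1)(X^{2}+X+1)(X^{4}-X-1),
\]
and the sequence annihilator comes out as $(X-1)(X^{3}-2)(X^{4}-2X-1)(X^{4}-X-1)$, which is exactly your $p(X)$ in factored form; the $X^{5}$ transient you anticipate is precisely what pushes the validity of the order-$12$ recurrence out to $n\ge 17$. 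Second, this explicit factorization already hands you the divisibility $X^{4}-2X-1\mid p(X)$ and makes the dominance check a finite comparison among the roots of four low-degree factors, so the ``genuine obstacle'' you flag is lighter than you feared; the paper simply asserts that the largest real root is that of $X^{4}-2X-1$ and leaves the (routine) verification implicit.
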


\begin{proof}
We computed the minimal polynomial of the associated matrix as
above, using {\tt Maple}.  It is
$$X^5(X-1)(X-4)(X+1)(X^2+1)(X^3-2)(X^4-2X-1)(X^2+X+1)(X^4-X-1).$$
Using a technique discussed in
\cite{Fleischer&Shallit:2019b}, we can find the annihilator 
for the sequence, which is
$$ (X-1)(X^3 - 2)(X^4 - 2X - 1)(X^4 - X - 1).$$
Expanding the coefficients of this polynomial gives us
the recurrence.  The largest real root is that of $X^4 - 2X-1$.
\end{proof}

\begin{remark}
The sequence $r_{42} (n)$ is sequence \seqnum{A330011} in the
OEIS.
\end{remark}

\section{Code}

All of the shell scripts, {\tt Maple} code, {\tt LinBox} code,
and automata discussed in the paper
are available at the website of the second author,\\
\centerline{\url{https://cs.uwaterloo.ca/~shallit/papers.html} \ . } 

\newcommand{\noopsort}[1]{} \newcommand{\singleletter}[1]{#1}

\end{document}